\newtheorem{theorem}{\sc Theorem}
\newtheorem{lemma}{\sc Lemma}
\newtheorem{prope}{\sc Property}
\newtheorem{coro}{\sc Corollary}
\newtheorem{nota}{\sc Notation}
\newtheorem{defin}{\sc Definition}
\newtheorem{cla}{\sc Claim}
\newtheorem{rem}{\sc Remark}
\newenvironment{corollary}{\begin{coro}}{\end{coro}}
\newenvironment{definition}{\begin{defin}}{\end{defin}}
\newcommand{\ld}[1]{\mbox{depth}_{#1}}
\begin{document}


\title{Logical depth for reversible Turing machines with an application to the
rate of decrease in logical depth for general Turing machines} 

\author{Paul M.B. Vit\'anyi\thanks{CWI and University of Amsterdam. Address:
CWI, Science Park 123, 1098XG Amsterdam, The Netherlands.
Email: Paul.Vitanyi@cwi.nl}}

\maketitle

\begin{abstract}
The logical depth of a {\em reversible} Turing machine
equals the shortest running time of a shortest program for it. This
is applied to show that the result in 
\cite{ASV17} is valid notwithstanding the error noted in  
Corrigendum~\cite{Vi18}.

keywords: Logical depth,
Kolmogorov complexity,
compression

\end{abstract}
\section{Introduction}
A book on number theory is
difficult, or `deep.' The book lists
a number of difficult theorems of number theory. However,
it has very low Kolmogorov complexity, since all
theorems are derivable from the initial few definitions.
Our estimate of the difficulty, or `depth,' of the book is based
on the fact that it takes a long time to reproduce the book
from part of the information in it.
The existence of a deep book is itself evidence of some long
evolution preceding it.

The logical depth of a (finite) string is related to complexity with bounded resources and measures the tradeoff between program sizes and running times.
Computing a string~$x$ from one of its shortest programs
may take a very long time, but computing the same string from a simple ``\verb print \verb$(x)$''
program of length about $|x|$ bits takes very little time.

Logical depth as defined in \cite{ben88} for a string 
 comes in two versions:
one based on the compressibility of programs of prefix Turing machines
and the other
using the ratio between algorithmic probabilities with and without time limits.
Since both are approximately the same (\cite[Theorem 7.7.1]{LV08} based 
on \cite[Lemma 3]{ben88}) it is no restriction
to use the compressibility version. 

The used notions of computability, resource-bounded computation time, 
self-delimiting strings,
big-O notation, and Kolmogorov complexity are
well-known and the properties, notations, are treated in \cite{LV08}.

\section{Preliminaries}
All Turing machines in this paper are prefix Turing machines.
A {\em prefix Turing machine} 
is a Turing machine with a one-way read-only
program tape, an auxiliary tape, one or more work tapes
and an output tape. All tapes are linear one-way infinite 
and divided into cells
capable of containing
one symbol out of a finite set. 
Initially the program tape is inscribed with an
infinite sequence of 0's and 1's and the head is scanning the leftmost cell.
When the computation terminates
the sequence of bits scanned on the input tape is the {\em program}.
For every fixed finite contents of the auxiliary tape
the set of programs for such a machine
is a prefix code (no program is a proper prefix of another program). 
Let $T_0,T_1, \ldots$ be the standard enumeration of 
prefix Turing machines.
A {\em universal prefix Turing machine} simulates every prefix 
Turing machine given its index number. We also require it to be {\em optimal}
which means that the simulation program is as short as possible.
We choose a {\em reference
optimal universal prefix Turing machine} and call it~$U$.

The prefix Kolmogorov complexity is based on the prefix Turing machine similar
to the (plain) Kolmogorov complexity based on the (plain) Turing machine.
Let $x,y$ be finite binary strings.
The {\em prefix Kolmogorov complexity}
$K(x|y)$ of $x$ with auxiliary $y$ is defined by
\[
  K(x|y)=\min_p \{|p| : U(p,y)= x \}.
\]
If $x$ is a binary string of length $n$ then
$K(x|y) \leq n+O( \log n)$.
Restricting the computation time resource is
indicated by a superscript giving the allowed number of steps, usually
denoted by $d$. 
The notation $U^d(p,y)=x$ means that $U(p,y)=x$ within
$d$ steps.
If the auxiliary string $y$
is the empty string $\epsilon$, then we usually
drop it. Similarly, we write $U(p)$ for $U(p,\epsilon)$.
The string $x^*$ is a {\em shortest program} for $x$ if $U(x^*)=x$ and
$K(x)=|x^*|$.
A string $x$ is $b$-{\em incompressible} if $|x^*| \geq |x|-b$. 


\section{Reversible Turing Machines}

A Turing machine behaves according to a finite
list of rules.
These rules determine, from the current state of the finite
control and the symbol contained in the cell under scan,
the operation to be performed next and the state to enter
at the end of the next operation execution.

The device is
({\em forward}) {\em deterministic}.
Not every possible combination of the first two elements has to be
in the set; in this way we permit the device
to perform {\it no} operation. In this case we say that the device
{\em \it halts}. 
Hence, we can define a Turing machine
by a {\em transition function}. 


\begin{definition}
\rm
A {\em reversible} Turing machine \cite{Be73,AG11} is a 
Turing machine that
is forward deterministic (any Turing mchine as defined is) 
but also {\em backward deterministic}, that is, 
the transition function has a single-valued inverse.
The details of the formal definition
are intricate \cite{Be73,AG11} and need not concern us here.
This definition extends in the obvious manner to multitape Turing
machines.
\end{definition}
In \cite{Be73} for every  1-tape ordinary Turing machine $T$ 
a 3-tape reversible Turing machine $T_{\rm rev}$ 
is constructed that emulates $T$ in linear 
time such that with input $p$ the output is 
$T_{\rm rev}(p)= (p,T(p))$. 
The reversible Turing machine that emulates $U$ is 
called $U_{\rm rev}$. 
\begin{definition}
\rm
Let $x$ be a string and $b$ a nonnegative integer.
The {\em logical depth} of $x$ {\em at significance level}
$b$, is
\[
\ld{b}(x) = \min \left\{d: p\in \{0,1\}^* \wedge U^d(p) = x \wedge |p| \leq K(p) + b \right\},
\]
the least number of steps to compute $x$ by a $b$-incompressible program.
\end{definition}

\begin{theorem}\label{theo.1}
The logical depth of a string $x$ at significance level $b \in {\cal N}$ 
for reversible Turing machines is equal to 
\[
\ld{b}(x) = \min \left\{d: p \in \{0,1\}^*  \wedge U^d_{\rm rev}(p) = (p,x) \wedge |p| \leq K(x) + b \right\},
\]
the least number of steps to compute $x$ by $U_{\rm rev}$ 
from a program of length at most $K(x)+b$.
\end{theorem}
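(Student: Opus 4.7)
The plan is to prove the equality by two inequalities, using Bennett's reversible simulation as the bridge: $U_{\rm rev}(p) = (p, U(p))$ with a linear-time overhead that I absorb into the step count, so running times on $U$ and on $U_{\rm rev}$ are comparable.

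The direction $\ld{b}(x) \leq \text{RHS}$ is the direct one. Take a witness $p$ to the RHS, i.e., $U^d_{\rm rev}(p) = (p,x)$ with $|p| \leq K(x) + b$. Then $U^d(p) = x$ via the reversible simulation. From $U(p) = x$ we have $K(x) \leq K(p) + O(1)$, so $|p| \leq K(x) + b \leq K(p) + b + O(1)$, which means $p$ is $b$-incompressible up to a constant slack, and so it witnesses $\ld{b}(x) \leq d$.

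For the reverse direction $\text{RHS} \leq \ld{b}(x)$, start from a $b$-incompressible program $p$ with $U^d(p) = x$ and $d = \ld{b}(x)$. Running $p$ on $U_{\rm rev}$ produces $(p,x)$ in $\leq d$ steps. The remaining step is to verify $|p| \leq K(x) + b$. The $b$-incompressibility gives only $|p| \leq K(p) + b$, and a priori $K(p)$ could exceed $K(x)$. Here the reversibility of $U_{\rm rev}$ is decisive: because the output $(p,x)$ of $U_{\rm rev}$ contains $p$ explicitly, the map $p \mapsto (p,x)$ is injective on its domain, and a canonical enumeration of $U_{\rm rev}$-programs whose output ends in $x$ pins down the first such $p$ given $x$, giving $K(p) \leq K(x) + O(1)$. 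Combined with $|p| \leq K(p) + b$, this yields $|p| \leq K(x) + b + O(1)$, so (after absorbing the constant into the significance level) $p$ witnesses the RHS at running time $\leq d$.

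The main obstacle is the length bound in the reverse direction: it is exactly the reversibility of $U_{\rm rev}$ that makes $K(p)$ controllable by $K(x)$, via the injective coupling of input and output. Without reversibility one loses a $\log d$ term in moving from incompressibility of $p$ to shortness of $p$, which cannot in general be absorbed into the additive significance level.
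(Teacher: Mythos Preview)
Your reverse direction has a genuine gap. You need $K(p)\le K(x)+O(1)$ for the specific $b$-incompressible program $p$ you started with, but your enumeration argument only yields this for the \emph{first} program found, not for an arbitrary one. There may be many $q$ with $U_{\rm rev}(q)=(q,x)$; singling out your particular $p$ from $x$ generally requires extra information (its rank in the enumeration, for instance), which can cost far more than $O(1)$ bits. The injectivity of $p\mapsto(p,x)$ that you invoke is trivially true---the output literally contains $p$---and says nothing about computing $p$ from $x$ alone. Without $K(p)\le K(x)+O(1)$, the implication from $|p|\le K(p)+b$ to $|p|\le K(x)+b$ fails, and this is exactly the delicate point that separates the reversible case from the general one (where, as the paper itself notes in Section~4, long incompressible programs for $x$ can exist). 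Your argument must therefore exploit reversibility more sharply than merely observing that Bennett's construction copies the input to the output.

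The paper's proof is structurally different and much terser: it does not split into two inequalities or route through $U$. It argues directly that backward determinism forces every incompressible program for $x$ on $U_{\rm rev}$ to have the length of a shortest program, the point being that a longer one would be obtainable from a shorter description via the reverse computation, contradicting incompressibility. Your enumeration does not capture this mechanism, and the feature you lean on (the input appearing in the output) is a property of Bennett's particular simulation rather than of backward determinism itself. A smaller point: since the theorem is stated for reversible machines, both sides already live on $U_{\rm rev}$, so the detour through $U$ in your first direction (and the attendant linear-overhead bookkeeping) is unnecessary.
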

\begin{proof}
Since a reversible Turing machine is backwards deterministic, 
and an incompressible program cannot be computed from a shorter program,
the length of an incompressible program for $x$ can only be the
length of a shortest program for $x$. The logical depth at significance $b$
is then the least number of steps to compute $x$ by $U_{\rm rev}$
from a program $p$ of length $K(x)+b$. 
\end{proof}

\section{The Rate of Decrease of Logical Depth}

In \cite[Section 4 ]{ASV17}
it is assumed that, for all $x \in \{0,1\}^*$, the string $x^*$ is the 
only incompressible string
such that $U(x^*)=x$. That is, logical depth according to 
Theorem~\ref{theo.1} is used.
However, this assumption is wrong for general Turing machines  in
that for many $x$ there may be an incompressible
string $p$ with $|x| \geq |p| > |x^*|$ such that $U(p)=x$. The computation 
of $U(p)=x$ may be faster than that of $U(x^*)= x$. 
For example, the function from $x \in \{0,1\}^*$ to the least number of steps
in a computation 
$U(p)=x$ for an incompressible string $p$ may be computable. 
The argument in the
paper is, however, correct for the set of reversible Turing machines. 
These Turing machines are a subset of the set of all 
Turing machines \cite{Be73,AG11} and emulate them in linear time. 
This implies the correctness of \cite[Theorem 2]{ASV17} as we shall show.

\begin{lemma}
Let $\psi$ be defined by
\[
\psi(n) = \max_{|x|=n}\min_d\{d:U^d_{\rm rev}(x^*)=(x^*,x)\}.
\]
Then $\psi$ is not computable and grows faster than any computable function.
\end{lemma}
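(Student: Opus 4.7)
The plan is to reduce the uncomputability (and fast growth) of $\psi$ to the corresponding property of $K$. First observe that $\psi$ is well-defined and total on every $n$: for any $x$ with $|x|=n$, the shortest program $x^*$ satisfies $U(x^*)=x$, so by the emulation property $U_{\rm rev}(x^*)=(x^*,x)$ halts in finitely many steps, and the minimum in the definition of $\psi$ is attained.

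Next, suppose for contradiction that some total computable $f$ dominates $\psi$, i.e.\ $\psi(n)\le f(n)$ for all $n$; I will show that $K$ then becomes computable. Given $x$ with $|x|=n$, enumerate every binary string $p$ of length at most $L(n)$, where $L(n)=n+O(\log n)$ is a computable upper bound on $K(y)$ for $|y|=n$ (coming from the prefix-free ``print'' program), and simulate $U_{\rm rev}(p)$ for exactly $f(n)$ steps. Let $\ell$ be the least $|p|$ for which this simulation produces $(p,x)$ within the time bound. Two inequalities pin $\ell=K(x)$: first, $U_{\rm rev}(p)=(p,x)$ implies $U(p)=x$, hence $\ell\ge K(x)$; second, the choice $p=x^*$ halts in $\psi(n)\le f(n)$ steps, hence $\ell\le|x^*|=K(x)$. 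Thus $K$ is computable, contradicting its standard uncomputability.

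To upgrade ``not computable'' to ``for every computable $f$, $\psi(n)>f(n)$ for infinitely many $n$,'' I would assume instead that $\psi(n)\le f(n)$ for all but finitely many $n$. The procedure above then computes $K(x)$ for every $x$ of sufficiently large length, and the remaining finite set of values can be absorbed into a lookup table to yield a fully computable $K$, reproducing the same contradiction.

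The only delicate point is the emulation equivalence $U_{\rm rev}(p)=(p,x)\Leftrightarrow U(p)=x$, which is what allows program-length bounds to transfer between $U$ and $U_{\rm rev}$ in both directions in the argument pinning $\ell=K(x)$; this is built into the construction of $U_{\rm rev}$ recalled in the preliminaries, so no real obstacle remains.
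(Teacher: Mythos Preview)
Your proof is correct and follows essentially the same approach as the paper: assume a computable bound on $\psi$, run $U_{\rm rev}$ on all short programs for that many steps, and read off $K(x)$ from the shortest $p$ producing $(p,x)$, contradicting the uncomputability of $K$. Your write-up is in fact more careful than the paper's on two points: you explicitly argue both inequalities $\ell\ge K(x)$ (via the emulation equivalence $U_{\rm rev}(p)=(p,x)\Rightarrow U(p)=x$) and $\ell\le K(x)$ (via $x^*$ halting within the time bound), and you spell out the finite lookup-table step needed to pass from ``no computable $f$ dominates $\psi$ everywhere'' to ``$\psi(n)>f(n)$ infinitely often.''
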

\begin{proof}
If a function $\psi$ as in the lemma were computable, then for an $x$
of length~$n$ we could run $U_{\rm rev}$ emulating $U$ \cite{Be73} 
forward for $\psi (n)$ 
steps on all programs of length $n+O(\log n)$. Among those
programs that halt within $\psi(n)$ steps, we could select 
the programs $p$ which output $(p,x)$.
Subsequently, we could select from that set a program $p$ of minimum length,
say $x^*$.
Such a program $x^*$ has length $K(x)$ since $U_{\rm rev}$ is  emulating $U$.
This would imply that $K$ would be computable.
But the function $K$ is incomputable \cite{Ko65,LV08}: contradiction. 
Therefore $\psi$
cannot be computable.
Since this holds for every function majoring $\psi$,
the function $\psi$ must grow faster than any computable function.
\end{proof}

\begin{corollary}
\rm
The set of reversible Turing machines is a subset of the set of 
all Turing machines. The emulation of $U(p)$ by $U_{\rm rev} (p)$ is 
linear time for all binary inputs $p$ by \cite{Be73}. 
Therefore, replacing in the lemma $U_{\rm rev}$ by $U$ changes $\psi(n)$ to 
$\phi(n) = \Omega ( \psi(n))$. Hence  the lemma holds with $\psi$
replaced by $\phi$ and $U_{\rm rev}$ by $U$. This gives us
\cite[Lemma 1]{ASV17} and therefore \cite[Theorem 2]{ASV17}
(the Busy Beaver upper bound is proved as it is in \cite{ASV17}):
\begin{theorem}
The function
\[
f(n)= \max_{|x|=n, \; 0\leq b \leq n} 
\{x:\ld{b}(x) - \ld{b+1}(x)\}
\]
grows faster
than any computable function but not as fast as the Busy Beaver function.
\end{theorem}
 
\end{corollary}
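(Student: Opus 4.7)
The plan is to combine Theorem~\ref{theo.1} with the preceding corollary via a telescoping pigeonhole argument for the lower bound, and to deduce the upper bound directly from the definition of the Busy Beaver function.

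For the lower bound, fix a large $n$ and let $x$ of length $n$ witness the maximum defining $\phi(n)$ in the corollary. By Theorem~\ref{theo.1} together with the linear-time emulation of $U$ by $U_{\mathrm{rev}}$, we have $\ld{0}(x) = \Omega(\phi(n))$. At the other end of the significance range, the ``print $x$'' program has length $n+O(1)$; since $K(x)+n \geq n+O(1)$ for every sufficiently long $x$, this program certifies $\ld{n}(x) \leq n+O(1)$. Because $\ld{b}(x)$ is non-increasing in $b$, the telescope
\[
\ld{0}(x) - \ld{n}(x) = \sum_{b=0}^{n-1}\bigl(\ld{b}(x) - \ld{b+1}(x)\bigr)
\]
has non-negative summands, and the pigeonhole principle yields some $b \in \{0,\dots,n-1\}$ with
\[
\ld{b}(x) - \ld{b+1}(x) \;\geq\; \frac{\Omega(\phi(n)) - n - O(1)}{n} \;=\; \Omega\!\left(\frac{\phi(n)}{n}\right),
\]
which lower-bounds $f(n)$.

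For the upper bound, any halting program contributing to $\ld{b}(x)$ with $|x|=n$ and $0 \leq b \leq n$ has length at most $K(x)+b \leq 2n+O(\log n)$ and hence terminates within $BB(2n+O(\log n))$ steps by definition of the Busy Beaver function. Therefore $f(n) \leq BB(2n+O(\log n))$, which (in the sense adopted in \cite{ASV17}) grows strictly slower than $BB$ itself.

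The main obstacle is making sure the super-computable growth of $\phi$ survives the division by $n$ produced by pigeonhole. This is handled by a routine contrapositive: if some computable $g$ satisfied $f(n) \leq g(n)$ for all $n$, then $\phi(n)$ would be bounded above by the computable function $n \mapsto c\,n\,g(n) + C\,n$, contradicting the corollary. Consequently $f$ grows faster than every computable function, and the theorem follows.
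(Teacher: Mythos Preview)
Your telescoping-plus-pigeonhole outline and the Busy Beaver upper bound are both in the spirit of \cite{ASV17}, and the paper indeed defers those parts back to \cite{ASV17}. The paper's own argument for the corollary is simply: the linear-time emulation gives $\phi(n)=\Omega(\psi(n))$, hence the incomputability and super-computable growth of $\psi$ transfer to $\phi$; this \emph{is} \cite[Lemma~1]{ASV17}, and then \cite[Theorem~2]{ASV17} is quoted verbatim.

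The substantive problem in your proposal is the step ``by Theorem~\ref{theo.1} together with the linear-time emulation of $U$ by $U_{\mathrm{rev}}$, we have $\ld{0}(x)=\Omega(\phi(n))$.'' Theorem~\ref{theo.1} characterises the \emph{reversible} logical depth as the running time of $U_{\mathrm{rev}}$ on a program of length at most $K(x)+b$; for $b=0$ this is essentially the $U_{\mathrm{rev}}$-time on $x^*$, and linear emulation links that to the $U$-time on $x^*$. But the quantity $\ld{0}(x)$ appearing in the theorem you are proving is the \emph{ordinary} logical depth, defined via $U$ and arbitrary $0$-incompressible programs. The whole point of Section~4 is that for general $U$ there may exist a $0$-incompressible $p$ with $|p|>|x^{*}|$ and $U(p)=x$ halting \emph{much faster} than $U(x^{*})$; in that case $\ld{0}(x)$ is far below the $U$-time on $x^{*}$, and your inequality $\ld{0}(x)=\Omega(\phi(n))$ fails. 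Theorem~\ref{theo.1} does not rescue this, because it speaks about a different depth notion; the linear-time emulation compares $U$ and $U_{\mathrm{rev}}$ on the \emph{same} program, not across different incompressible programs.

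Concretely, your argument reproduces exactly the step flagged as erroneous in the Corrigendum: it silently assumes that the only $0$-incompressible program for $x$ is $x^{*}$. To follow the paper, you should instead stop once you have shown that $\phi$ grows faster than every computable function (this is what the corollary actually establishes), and then invoke the derivation in \cite{ASV17} of Theorem~2 from its Lemma~1, rather than trying to bound $\ld{0}(x)$ below by $\phi(n)$ directly.
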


{\small
\bibliographystyle{plain}

\begin{thebibliography}{1}
\bibitem{ASV17}
L.F. Antunes, A. Souto, and P.M.B. Vit\'anyi,
On the Rate of Decrease in Logical Depth, 
{\em Theor. Comput. Sci.}, 702(2017), 60--64.

\bibitem{AG11}
H.B. Axelsen and R. Gl\"uck, A simple and efficient universal 
reversible Turing machine, {\em Proc. 5th Int. Conf. Language and Automata 
Theory and Applications}, Lecture notes in computer science, Vol. 6638, 
Springer, 2011, 117--128.

%

\bibitem{Be73}
C.H. Bennett, Logical reversibility of computation,
{\em  IBM J. Research and Development}, 17:6(1973), 525--532. 

\bibitem{ben88}
C.H. Bennett,
Logical depth and physical complexity, pages 227--257
in {\em  The Universal Turing Machine A Half-Century Survey},
R. Herken Ed., Oxford University Press, 1988.

 
\bibitem{LV08}
M. Li and P.M.B. Vit\'{a}nyi,
{\em An Introduction to Kolmogorov Complexity and Its Applications},
Springer, New York, 2008.

\bibitem{Ko65}
A.N. Kolmogorov,
Three approaches to the quantitative definition of information.
{\em Problems Inform. Transmission}, 1:1(1965), 1--7.


\bibitem{Vi18}
P.M.B. Vit\'anyi, Corrigendum to ``On the rate of decrease in logical depth''
by L.F. Antunes, A. Souto, and P.M.B. Vit\'anyi
[Theoret. Comput. Sci. 702 (2017) 60--64], {\em Theoret. Comput. Sci.},
https://doi.org/10.1016/j.tcs.2018.07.009 (In Press, Corrected Proof).



\end{thebibliography}

\end{document}